%
\documentclass[runningheads]{llncs}
\usepackage{graphicx}
\usepackage{amssymb}
\usepackage{booktabs}
\usepackage{cite}
\usepackage{algorithm}
\usepackage{algpseudocode}
\usepackage{multirow}
\usepackage{amsmath,array,booktabs}
\usepackage{array} 
\setlength\extrarowheight{2pt} 
\newcommand\mc[1]{\multicolumn{1}{c}{#1}} 
\usepackage{subcaption}
\usepackage[utf8]{inputenc}
\usepackage{mathtools}
\usepackage{makecell}
\usepackage{lipsum}
\setlength{\textfloatsep}{0.1cm}
\addtolength{\parskip}{-0.5mm}
\setlength{\intextsep}{0.1pt}

\allowdisplaybreaks
\setlength{\abovedisplayskip}{1pt}
\setlength{\belowdisplayskip}{1pt}
%

\begin{document}
\title{A Set Cover Mapping Heuristic for Demand-Robust Fleet Size Vehicle Routing Problem with Time Windows and Compatibility Constraints}
\titlerunning{A Set Cover Mapping Heuristic for Demand-Robust Fleet Size Problem}
%
\author{Jordan Makansi, Ketan Savla}
\authorrunning{J. Makansi et al.}
%
\institute{University of Southern California, Los Angeles, USA 
\email{makansi@usc.edu}\\
}
\maketitle              
\begin{abstract}
We study the demand-robust fleet size vehicle routing problem with time windows and compatibility constraints. Unlike traditional robust optimization, which considers uncertainty in the data, demand-robust optimization considers uncertainty in which constraints must be satisfied.  This paper is the first to solve a practical demand-robust optimization problem at large scale.  We present an MILP formulation and also propose a heuristic that maps the problem to set cover in polynomial time.  We show that under modest assumptions the relative difference in time complexity from a standard branch-and-bound algorithm to the proposed heuristic scales exponentially with the size of the problem.  We evaluate our heuristic using a simulation case study on the Solomon benchmark instances for a variety of practical problem sizes, and compare with Gurobi.  The empirical approximation ratio remains below $2.0$.
\keywords{vehicle routing  \and demand-robust optimization \and fleet-size problem \and approximation algorithm}
\end{abstract}
\section{Introduction}
Traditionally, the objective in vehicle routing problems is to minimize some variation of the total cost.  While minimizing the cost is practical in the short term, the number of agents (fleet size) is assumed to be given during the routing phase.  A lesser-studied but perhaps more practical problem for long-term planning is the problem of determining the fleet size and mix of agents to purchase by stakeholders, so that they are available to service any routing problems that may arise \cite{corberan2021arc}.  \\
\indent Yet, even solving the fleet size problem alone has practical limitations. Its solutions determine the fleet size for a fixed set of routing constraints.  However, in any practical routing problem, requirements change frequently:  time-windows change, connections in a network become unavailable, new ones appear, etc.  Motivated by this need to address uncertainty in routing requirements when determining a fleet size and mix, we study the Demand-Robust Fleet Size Problem with Time Windows and Compatibility Constraints (DRFSP) by considering a finite number of scenarios, and determining fleet size by minimizing the maximum fleet cost over all scenarios. DRFSP is NP-Hard since two special cases are NP-Hard: the fleet size and mix problem \cite{FSM-Golden-1984}, and the vehicle routing problem with compatibility constraints \cite{yu2018approximation}.
\indent In the DRFSP we are given a finite number of scenarios, each with a set of customers which must be serviced. For each scenario the set of customers to be serviced has an associated time window in which the service must begin.  There may be several types of agents and only certain agents may service certain customers for each scenario.  If agents of a particular type may service a particular customer, the agent type and customer are "compatible".  There are 2 stages of decision-making in this problem; agents may be purchased in the first stage, or in the second stage at an inflated cost, specific to the agent type and the scenario. Each agents' path must start and end at a distinguished depot denoted $0$. The objective is to determine the number and type of agents to purchase in each stage, for each scenario, while minimizing the maximum sum of first and second stage costs, over all scenarios.\\
\indent There are three aspects of the literature that relate to our proposed 3-phase algorithm: fleet-size vehicle routing problem with time windows, compatibility constraints, and demand-robust optimization.  We briefly review the relevant literature for each.  Because it is NP-Hard, several heuristics have been proposed to solve the fleet size and mix problem with time windows: \cite{desrochers1991new, gheysens1986new, liu1999fleet}.  Initially savings heuristics were proposed, inspired by those for the fleet size and mix vehicle routing problem, and later \cite{dell2007heuristic} improved upon these results using a multi-start algorithm. 
In contrast to time windows, compatibility constraints are comparatively less common in the literature.  Most solution methods for compatibility constraints either involve exact methods, e.g. branch-and-cut-and-price \cite{ghiani2005waste, bustos2022drayage}, or approximations \cite{yu2018approximation} which show the efficacy of mapping to covering problems.  In the 3-phase heuristic algorithm proposed in this paper, we build upon heuristics for the two problems just described, by combining elements from both:  An insertion heuristic is used for generating routes, and the routes form a set-cover mapping used for solving a class of 2-stage problems.\\
\indent Robust optimization is used for preparing for the worst-case scenario, making it appropriate for long-term planning such as fleet-sizing.  While the routing literature has seen some exposure to robust optimization, it has been limited to uncertainty in the data, such as travel times, or demand serviced e.g. \cite{shen2022robust, zhang2023robust}.  However, in long-term planning the topology and the required arcs may change, which corresponds to a change which constraints must be satisfied. 
Since it was first conceived, demand-robust optimization has remained primarily in the computer science community \cite{dhamdhere2005pay}, and has been applied only to canonical problems such as shortest path, vertex cover, facility location, \cite{el2021power}. However, it has not been applied to any practical optimization problems of scale. Although some have suggested algorithms that are suitable for large-scale optimization, such as column generation \cite{bouman2011recoverable}, they still are limited to solving simple problems such as shortest path.  To the authors' knowledge this work is the first to address this gap in the literature; We design and evaluate the performance of a heuristic algorithm to solve demand-robust optimization to a large-scale practical problem motivated by the fleet sizing problem with time windows and compatibility constraints.\\
\indent We make the following contributions: (i) we propose the first application of demand robust optimization to a practical problem at scale and present its MILP formulation (ii) we propose a heuristic which we show has an exponential relative difference in time complexity compared to branch-and-bound, (iii) we present computational results on Solomon's benchmark instances, for a variety of problem instances, empirically showing an approximation ratio less than 2.0.
\section{Problem Formulation}
We are given a finite number of scenarios $k=1...m$, and for each scenario a "timetable", denoted $G_k$.  The timetable with only entries for a particular agent type is $G^t_k \subseteq G_k$.  Each entry in the timetable contains a customer $i$ and a time window $[e_i, l_i]$ within which the service of that customer must start, and the set of types of agents $T_i$ which are allowed to service that customer.  Note that not all customers need appear in the timetable, and also note that a customer may appear more than once in a timetable, with a different time window or a different set of agent types $T_i$.  Each customer also has an associated service time  $s_i$, which is how long it takes to service.  An example is in Example \ref{ex:example_1} and will be used throughout this paper.
  There are 2 stages of decision-making in this problem.  We are given the cost per agent per type in the first stage $c^t$, and an inflation factor for the cost of agents in the second stage $c^t_k = \sigma_k c^t$, $\sigma_k>1$ for each scenario $k$.  In the first stage, the decision-maker is blind to which scenario will be realized, and purchases agents at price $c^t$ for type $t$.  In the second stage, the scenario $k$ is realized, and the decision-maker purchases the remaining agents at a cost of $\sigma_k c^t$ required to service all of the arcs in the timetable.  Each agents' path must start and end at a distinguished depot denoted vertex $0$.  The objective is to determine the number and type of agents to purchase in the first stage and second stage for each scenario,  to minimize the maximum cost required over all of the scenarios. \\
\begin{example} \label{ex:example_1}
Example with 2 scenarios, 2 types, and 9 customers.
\end{example}
\begin{table}[] 
\caption{Timetables and routes for two scenarios}
\label{tab:timetable_examples}
\begin{tabular}{@{}llllllll@{}}
\multicolumn{3}{c}{Scenario 0}\\
\toprule
i & $e_i$  & $l_i$ & $s_i$  & $T_i$ & $r^{A,0}_0$ & $r^{A,1}_0$ & $r^{B,0}_0$\\ \midrule
2 & 0.0  & 193.0   & 10.0 & A &  \checkmark &   &\\
3 & 673.0& 793.0   & 10.0 & A &             & \checkmark & \\
4 & 152.0& 272.0   & 10.0 & A &  \checkmark &  & \\
7 & 644.0& 764.0   & 10.0 & B &   &  &\checkmark  \\
8 & 73.0 & 193.0   & 10.0 & AB&  \checkmark & &\checkmark \\ \bottomrule
\end{tabular}
\qquad \qquad
\begin{tabular}{@{}lllllll@{}}
\multicolumn{3}{c}{Scenario 1}\\
\toprule
i & $e_i$  & $l_i$& $s_i$ & $T_i$ &  $r^{A,0}_1$ & $r^{B,0}_1$\\ \midrule
1 & 0.0   & 960.0 & 10.0 &A  & \checkmark&  \\
2 & 73.0  & 193.0 & 10.0 &B  &  & \checkmark\\
4 & 644.0 & 764.0 & 10.0 &B  &  & \checkmark\\
5 & 73.0  & 193.0 & 10.0 &AB & \checkmark & \checkmark \\
9 & 371.0 & 491.0 & 10.0 &AB & \checkmark & \checkmark\\ \bottomrule
\end{tabular}
\end{table}
\begin{remark} Without compatibility constraints, DRFSP has an obvious solution: Solve fleet size problem with time windows for each scenario independently, and purchase all agents in the first stage for the scenario requiring the most agents.  This solution is guaranteed to be feasible for all other scenarios since all the other scenarios require fewer agents, and optimal since agents are always cheaper in the first stage.  Compatibility constraints however may make this solution infeasible for other scenarios, hence demand-robust optimization is appropriate.
\end{remark}
The MILP formulation for this problem is shown in equation (\ref{eq:milp_demand_robust}). Indices $s,k,t,p$ are for stage, scenario, agent type, agent index, resp., and $i,j$ are for customers. Vectors $x,y,t$ are variables. $x^{s,k,t,p}_{i,j} \in \{0,1\}$ represents if customer $i$ was serviced immediately before customer $j$, $t^{s,k,t,p}_{i}$ is real-valued and represents the time when service to customer $i$ begins, $y^{t,p} \in \{0,1\}$ represents agent $p$ of type $t$ was purchased in the first stage.  $R^t_i=1$ if customer $i$ may be served by agent type $t$, 0 otherwise. $P$ is maximum supply of agents available for purchase. $s_i$ is the time it takes to service customer $i$. $W_k, \sigma_k$ are the set of customers that must be serviced, and the inflation factor resp., for scenario $k$. $M$ is a sufficiently-large constant to ensure logical constraints are consistent.\\
\begin{subequations} \label{eq:milp_demand_robust}
\begin{alignat}{2}
& \min_{x^{s,k,t,p}_{i,j}, t^{s,k,t,p}_{i}, y^{t,p}, z}   && z   \label{eq:milp_obj_demand_robust}
\\
& &  & \text{s.t. } \sum^T_{t=1} c^t \sum^P_{p=1}  (y^{t,p} + \sigma_k \sum_{j \in W_k} x^{1,k,t,p}_{0,j}) \le z , \,\, \forall k  \label{eq:min_max_demand_robust}
\\
& &  & \sum_t^T \sum_p^P \sum_{i \in \{0, W_k\}, i \neq j} ( x^{0,k,t,p}_{i,j} +  x^{1,k,t,p}_{i,j}) \ge 1, \,\, \forall j \in W_k, \,\, \forall k , \label{eq:cover_req_arcs_demand_robust}
\\
& &  &   \sum_{j \in \{W_k, N\}} x^{s,k,t,p}_{0,j} \le 1, \, \forall p, t, k, s \label{eq:leave_original_depot_once_demand_robust}
\\
& &  &   \sum_{i \in \{0, W_k\}, i \neq j} x^{s,k,t,p}_{i,j} \le R^t_{j,k} \,\,  \forall j \in W_k, \forall p, t, k, s  \label{eq:compatibility_first_stage_demand_robust}
\\
& &  &  \sum_{i \in \{0, W_k\}, i \neq j} x^{s,k,t,p}_{i,j}= \sum_{i \in \{W_k, N\}, i \neq j} x^{s,k,t,p}_{j,i}, \, \forall j \in \{W_k\} \,\forall p, t, k, s \label{eq:flow_conservation_demand_robust}
\\
& &  &   (t^{s,k,t,p}_{i} + s_i + d_{i,j,k}) \le t^{s,k,t,p}_{j} + M(1- x^{s,k,t,p}_{i,j}),  \,\,\, \forall i,j \in W_k,\, \forall p, t, k, s \label{eq:consistent_demand_robust}
\\
& &  &  (t^{s,k,t,p}_{i} - e_{i,k}) \geq 0 \,\, \forall i \in W_k,  \forall p, t, k, s \label{eq:time_window_a_demand_robust}
\\
& &  &  (t^{s,k,t,p}_{i} -  l_{i,k}) \leq 0 \,\, \forall i \in W_k, \forall p,t,k,s \label{eq:time_window_b_demand_robust}
\\
& &  &  x^{0,k,t,p}_{0,j} -   y^{t,p} \leq 0, \,\, \forall j \in W_k, \,\, \forall p , \forall t , \forall k  \label{eq:first_stage_purchases_demand_robust}
\\
& &  &  x^{s,k,t,p}_{i,j} =0 \,\, \forall i,j \in W_k, \, i=j, \,  \forall p , \forall t , \forall k ,\,\,s\in\{0,1\}  \label{eq:no_self_loop_demand_robust}
\\
& &  &  x^{s,k,t,p}_{i,0} =0 \,\, \forall i \in W_k, \, \,  \forall p , \forall t , \forall k ,\,\,s\in\{0,1\}  \label{eq:no_enter_original_depot_demand_robust}
\\
& &  &  x^{s,k,t,p}_{N,j} =0 \,\, \forall j \in W_k, \,\,  \forall p , \forall t , \forall k ,\,\,s\in\{0,1\}  \label{eq:no_leave_final_depot_demand_robust}
\\
& &  &  x^{s,k,t,p}_{i,j}, y^{t,p} \in \{0,1\}, \,\, \,\, \forall i,j \in W_k, \,  \forall p , \forall t , \forall k ,\,\,s\in\{0,1\}  
\\
& &  &  t^{s,k,t,p}_{i}, z \in \mathbb{R^+}, \,\, \forall i \in W_k, \, \forall p , \forall t , \forall k ,\,\,s\in\{0,1\}
\end{alignat}
\end{subequations}
Constraint (\ref{eq:min_max_demand_robust}) is for the min-max objective. Constraint (\ref{eq:cover_req_arcs_demand_robust}) ensures that exactly one agent visits each required customer. Constraint (\ref{eq:leave_original_depot_once_demand_robust}) ensures each agent leaves the original depot only once. Constraint (\ref{eq:compatibility_first_stage_demand_robust}) ensures compatibility. Constraint (\ref{eq:flow_conservation_demand_robust}) ensures flow conservation. Constraint (\ref{eq:consistent_demand_robust}) ensures consistency between service times and travel times (and also eliminates subtours in conjunction with (\ref{eq:leave_original_depot_once_demand_robust}) and (\ref{eq:flow_conservation_demand_robust})). Constraints (\ref{eq:time_window_a_demand_robust}) and (\ref{eq:time_window_b_demand_robust}) ensure arcs serviced within the time window. Constraint (\ref{eq:first_stage_purchases_demand_robust}) ensures consistency between first and second stage purchases, and the remaining constraints are to ensure each used agent leaves the original depot $0$ and arrives at the final depot $N$, and for domain constraints.  
\section{3-Phase Set Cover Mapping (SCM) Heuristic Algorithm}
The proposed solution method has 3 phases:First, we solve the Fleet Size Problem with Time Windows (FSPTW) separately for each scenario, using an insertion heuristic. In the second phase, the routes given as output from FSPTW are used to construct "route sets" using a greedy approach.  In the third phase these "route sets" become an input to solve a demand robust weighted set cover problem to obtain the final solution.  We refer to these three phases as \textit{Generate Routes}, \textit{Construct Route Sets}, and \textit{Solve Demand Robust Weighted Set Cover}, respectively.
\subsection{Generate Routes}
The first phase approximates a solution to the Fleet Size Problem with Time Windows, for each agent type $t$ and each scenario $k$, and outputs $S^t_k$, set of all routes for scenario $k$ for agent of type $t$.\\
\begin{definition}[route]
A \textit{route} $r^{t,p}_k$ for agent $p$ of type $t$ for scenario $k$ is a realization of $\{x^{t,p}_{i,j}\}_{i,j}$ and $\{t^{t,p}_i\}_i$ variables which satisfy the time windows and compatibility constraints for a subset of entries in $G^t_k$.
\end{definition}
\begin{problem} \textbf{Fleet Size Problem with Time Windows} \label{prob:FSPTW}\\
\textbf{Input}: $G^t_k$ \\
\textbf{Output}: Routes for each agent, $r^{t,p}_{k}$, which service all customers within the time window.\\
\textbf{Objective}: Minimize the number of agents (routes) $|S^t_k|$.
\end{problem}
\begin{example} \label{ex:example_2}
Continuing Example \ref{ex:example_1}, we generate routes for agents of type $A$ in scenario 0 shown in Table \ref{tab:timetable_examples}. The initial route is $r^{A,0}_0= \{0,2,0\}$. We evaluate inserting customer $i=3$ before customer $j=2$, given the distance matrix shown in Table \ref{tab:distances}.  Route $\{0,3,2,0\}$ is not feasible (calculations shown in Table \ref{tab:sample_calc}). Route $\{0,2,3,0\}$ is evaluated, and its objective is $0.5(1183-210)+0.5(120)$.  This is repeated for customers $i=4,7,8$, and the best insertion is made according to objective (\ref{eq:savings_objective}).
\end{example}
\begin{table}
\captionsetup{labelfont=bf,
              justification=raggedright,
              singlelinecheck=false}
\caption{Example of customer insertion heuristic}
\begin{subtable}[b]{1.48\textwidth}
\captionsetup{singlelinecheck=false,justification=justified}
\caption{Calculations for inserting customer i=3 into initial route $r^{A,0}_0= \{0,2,0\}$}
\label{tab:sample_calc}                                                                        
\begin{tabular}[b]{@{} ll >{$}l<{$} ll @{}}                                                   
\toprule
j & route  & t'_j - t_j  & $l_i - t_i$ & TW Feasible  \\ \midrule
2 & \{0,3,2,0\}   
  & \begin{aligned}[t]
                     t_j&=100 \\
                     t'_j&=\max \{500, 673\} + 10 + 200
                   \end{aligned} & $793-\max \{500, 673\}$ &  No, $t'_2 > l_2$  \\ 
0 & \{0,2,3,0\}
  & \begin{aligned}[t]
                     t_j&=210 \\
                     t'_j&=\max \{ \max \{100,0\} + 10 + 200, 673\} \\
                         & + 10 + 500\\
                   \end{aligned} & $793-\max \{310,673\}$ &  Yes \\ 
\bottomrule
\end{tabular}
\end{subtable}
\begin{subtable}[b]{1.48\textwidth}
\captionsetup{singlelinecheck=false,justification=justified}
\caption{Matrix of travel times $d_{i,j}$ for customers used for routing customer $i=3$}
\label{tab:distances}
$\begin{array}{*{6}{c|}}
\mc{} & \mc{0} & \mc{2} & \mc{3} \\ \cline{2-4}
0     & 0      & 100    & 500    \\ \cline{2-4}
2     & 100    & 0      & 200    \\ \cline{2-4}
\end{array}$
\end{subtable}
\end{table}
We approximately solve  \textbf{Fleet Size Problem with Time Windows} shown in \textit{Problem \ref{prob:FSPTW}} for each $t,k$ by an insertion heuristic, while simultaneously ensuring time-feasibility using approaches in \cite{solomon1987algorithms}. In what follows in this section, we use $j$ to denote the customer immediately after the proposed insertion location of customer $i$, $t_j$ and $ t'_j$ to denote the original service time, and the new service time after inserting customer $i$ along the route, resp.
The heuristic works as follows: For a given agent type $t$ for a single scenario $k$, we construct routes for entries $G^t_k$ starting with an initial feasible route with a single customer (in our implementation, we chose the customer with $\min l_i$).  We then iterate over compatible customers $i$, and all possible insertion locations.  For each possible customer/location, we compute an objective, shown in (\ref{eq:savings_objective}) that measures the time disruption to the current route.  Customers are inserted in this manner until no more customers may be inserted without violating the time window constraints, and a new route is generated.  This repeats until all customers for the current scenario that are compatible with the selected agent type have been inserted into a route.
The objective described is 
\begin{subequations} \label{eq:savings_objective}
\begin{alignat}{2}
& \min_{i}   && \min_{j} \qquad (1-\phi)(t'_j - t_j)+\phi(l_i-t_i)   \label{eq:something_else}
\end{alignat}
\end{subequations}
which is a weighted combination of the delay in the customer $j$ immediately before $i$ in current route, and how urgent it is to service the inserted customer $i$. $\phi \in [0,1]$ is an algorithmic parameter that controls the tradeoff between these two time window objectives.  Since service can only start within $[e_i, l_i]$, the service start time of a feasible insertion is the maximum of the arrival time, and $e_i$.  We apply this procedure to our running example in Example \ref{ex:example_2}. The entire \textit{Generate Routes} phase is shown in Algorithm \ref{alg:generate_routes}, where lines \ref{start} through \ref{end} are the insertion heuristic for solving the \textbf{Fleet Size Problem with Time Windows}.\\
\setlength{\floatsep}{1pt}
\begin{algorithm}
\caption{GenerateRoutes}\label{alg:generate_routes}
\begin{algorithmic}[1]
\Require $G_k,\,\forall k$ 
\Ensure $S^t_k,\,\, \forall t,k$ 
\For {$k \in 1....m$} 
  \For {$t \in 1....T$} 
    \State $\bar{G}^t_k \gets G^t_k$ \Comment Create copy of $G^t_k$
    \While {$\bar{G}^t_k \cap S^t_k  \neq \emptyset$} \label{start}
      \While {exists an entry $\{i, (e_i, l_i), T_i \}_i$ that may be inserted feasibly}
        \State Insert entry $\{i, (e_i, l_i), T_i \}_i$ into route $r^{t,p}_{k}$ in a location that minimizes objective \ref{eq:savings_objective} \label{insertion}
        \State $\bar{G}^t_k \gets \bar{G}^t_k \setminus \{i, (e_i, l_i), T_i \}_i$ \Comment Remove entry from $\bar{G}^t_k$
      \EndWhile
      \State $S^t_k \gets  S^t_k \cup  \{r^{t,p}_{k}\}$ 
      \State $r^{t,p}_{k} = \emptyset$ \Comment Start new route.
    \EndWhile \label{end}
  \EndFor
\EndFor
\end{algorithmic}
\end{algorithm}
\subsection{Construct Route Sets}
We introduce a greedy heuristic which constructs sets of routes for each agent type, by maximizing its coverage of the compatible customers for each scenario.  These \textit{route sets} become inputs to the demand robust weighted set cover problem.
\begin{definition}[route set]
A \textit{route set} $S^t_l$ (indexed by $l$) for agent of type $t$ is a set of $m$ routes $\{r^{t,p}_1,...r^{t,p}_k,...r^{t,p}_m\}$ one for each scenario $k$.
\end{definition}
We generate the route sets $S^t_l$ in a greedy fashion:  For each agent type, we add select routes that maximize coverage of uncovered elements of the current timetable, and add them to the current route set.  This process is repeated until for each entries in all timetables  $G_k$, there exists a route set containing it.  Continuing our 2-scenario example from table \ref{tab:timetable_examples}, three route sets would be created, in the following order: $S^A_0=\{r^{A,0}_0, r^{A,0}_1\},\, S^A_1=\{r^{A,1}_0, r^{A,0}_1\},\, S^B_0=\{r^{B,0}_0, r^{B,0}_1\}$. We use $S^t=\{S^t_l \}^{L^t}_{l}$ to denote the collection of "route sets" for agent type $t$, and $L^t=|S^t|$ is the cardinality of the collection of route sets for agent type $t$ ($L^t \le \max_k |G^t_k|$).
  This ensures that the resulting demand robust weighted set cover problem is well-posed. The procedure is described in Algorithm \ref{alg:construct_route_sets}.
\begin{algorithm}
\caption{ConstructRouteSets}\label{alg:construct_route_sets}
\begin{algorithmic}[1]
\Require $S^t_k \,\, \forall t,k$ 
\Ensure $S^t, \,\, \forall t$ 
\For {$t \in 1....T$}
  \State $S^t = \emptyset$ 
  \While {for some $k$, $\exists \{i, (e_i, l_i), T_i \} \in G_k$ for scenario $k$, s.t. $t \in T_i$ and $\{i, (e_i, l_i), T_i \} \notin S^t$}: 
    \State $S^t_l = \emptyset$ 
    \For {$k \in 1....m$}
        \State Pick $r^{t,p}_{k}$ that maximizes $|r^{t,p}_{k} \cap (G_k \setminus S^t_l)|$. 
        \State $S^t_l \gets S^t_l \cup \{r^{t,p}_{k}\}$ 
    \EndFor
    \State $S^t \gets S^t \cup S^t_l$ 
  \EndWhile
\EndFor
\end{algorithmic}
\end{algorithm}
\subsection{Demand-Robust Weighted Set Cover}
In the third phase, \textit{Demand-Robust Weighted Set Cover (DRWSC)}, the route sets become parameters in solving Demand-Robust Weighted Set Cover problem, which is described as follows: We are given a universe $\mathcal{U}$, and a finite number of scenarios $A_k \subseteq \mathcal{U}$.  We are also given sets $S$, which may be purchased at a cost of $c_0(S)$ during the first stage, or at a cost $\sigma_k c_0(S)$ in the second stage.  Each element $e \in \mathcal{U}$ must be covered either in the first stage, or in the second stage for all scenarios containing it.  The objective is to minimize the maximum sum of first and second stage costs over all scenarios.
In the MILP formulation below, $x^{0}_{S}\in \{0,1\}$ are variables indicating set $S$ was purchased in first stage, and $x^{k}_{S}\in \{0,1\}$ are variables indicating set $S$ was purchased in second stage for scenario $k$.\\
\begin{subequations} \label{eq:demand_robust_weighted_set_cover}
\begin{alignat}{2}
& \min_{x^{0}_{S}, x^k_{S}}   && z   \label{eq:obj_demand_robust_weighted_set_cover}
\\
& &  & \text{s.t. } \sum_{S} c_0(S) (x^{0}_{S} + \sigma_k x^k_{S}) \le z , \,\, \forall k  \label{eq:min_max_demand_robust_weighted_set_cover}
\\
& &  & \sum_{S:e\in S} x^{0}_{S} + \sum_{S:e\in S} x^k_{S} \ge 1 , \,\, \forall k, \,\, \forall e \in A_k  \label{eq:coverage_demand_robust_weighted_set_cover}
\\
& &  &  x^{0}_{S}, x^k_{S} \in \{0,1\}, \,\, \forall k, S \label{eq:domain}
\end{alignat}
\end{subequations}
Constraint (\ref{eq:coverage_demand_robust_weighted_set_cover}) ensures that all entries in all elements in all scenarios are covered either in first stage or second stage purchases, constraint (\ref{eq:obj_demand_robust_weighted_set_cover}) minimizes the maximum over all scenarios, and objective (\ref{eq:domain}) is the constraint on the domains of the decision variables.\\
We map to DRWSC above using the \textit{route sets} introduced earlier:  Each entry in the timetable $G_k$ becomes an element $e$ in the problem (\ref{eq:demand_robust_weighted_set_cover}). The union of the entries in all of the timetables $\cup_{k=1} G_k$ becomes the universe $\mathcal{U}$. The route sets $S^t_l$ span across all scenarios, and become the sets $S$ available for purchase. The elements in each scenario $A_k$ map 1:1 to the entries in timetable $G_k$.  The costs $c_0(S), \sigma_k c_0(S)$ for first and second stage, respectively are given by the cost of the agent type $t$ corresponding to the collecion of route sets $S^t_l$. Continuing our example from before, route sets $S^A_0=\{r^{A,0}_0, r^{A,0}_1\},\, S^A_1=\{r^{A,1}_0, r^{A,0}_1\},\,$ and $S^B_0=\{r^{B,0}_0, r^{B,0}_1\}$ become sets $S$ in problem (\ref{eq:demand_robust_weighted_set_cover}), and the costs $c(S^A_0), c(S^A_1), c(S^B_0)$, are the costs $c^A, c^B$, respectively.
\section{Time Complexity Analysis}
We refer to our 3-phased set cover mapping heuristic as SCM throughout our analysis, and we refer to applying Gurobi directly for solving problem (\ref{eq:milp_demand_robust}) as MIP.  Since by default Gurobi is using branch-and-bound (BnB), wherever Gurobi is applied, we assume a standard BnB algorithm \cite{gurobi}. In BnB a divide-and-conquer approach is used to solve integer programs: The feasible region is divided into subregions, each of which corresponds to a node in the BnB tree.  A subproblem is solved, and an upper bound on the objective value for that subproblem is obtained.  If it is worse compared to the current best solution, the subproblem is eliminated,  otherwise the feasible region is further divided, creating new subproblems.  This is repeated until all subproblems are solved.  The time complexity of this standard BnB algorithm is $O(Mb^d)$, where $M$ is an upper bound on the time it takes to explore a subproblem at a node in the BnB tree, and $b,d$ are the number of branches and depth of the BnB tree\cite{morrison2016branch}. We assume each subproblem is solved via linear relaxation.  Gurobi uses simplex method to solve the linear relaxation, which is exponential in worst case, but in practice it is frequently polynomial in the number of variables and constraints \cite{spielman2004smoothed}.  So, in our worst case analysis, we are only concerned with the size of the BnB tree, $O(b^d)$.  Let $W$ to denote $\max_k |G_k|$, and DRWSC to denote solving the Demand-Robust Weighted Set Cover problem in the third phase of SCM.
\begin{remark} SCM does not need $P$ as an input, but the fleet size obtained after solving is given as input to MIP. So, $P$ is sufficiently high such that both SCM and MIP can find the same solution.
\end{remark}
\begin{lemma}
The asymptotic time complexities of SCM and MIP are $O(W^3 T m + WT + 2^{WT(m+1)})$, and $O(2^{2mTPW^2 + TP})$, respectively.
\end{lemma}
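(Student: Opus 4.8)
The plan is to bound each algorithm's running time as a sum over its constituent parts and then keep the dominant terms. For SCM I would analyze the three phases \emph{Generate Routes} (Algorithm~\ref{alg:generate_routes}), \emph{Construct Route Sets} (Algorithm~\ref{alg:construct_route_sets}), and \emph{Demand-Robust Weighted Set Cover} separately, since they run sequentially and their costs simply add. For MIP I would instead count the binary decision variables in formulation~(\ref{eq:milp_demand_robust}) and feed this into the branch-and-bound tree-size bound $O(Mb^d)$ stated above: with binary branching the branching factor is $b=2$ and the worst-case depth $d$ equals the number of binary variables, while the per-node cost $M$ (an LP relaxation solved by simplex) is assumed polynomial and therefore absorbed, so that only the exponential tree size $O(2^{d})$ survives.

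For \emph{Generate Routes}, the two outer loops run $m$ and $T$ times. Inside, the insertion heuristic inserts at most $W=\max_k|G_k|$ customers, and inserting one customer scans every remaining customer against every insertion position in the current route, i.e.\ $O(W^2)$ candidate (customer, position) pairs, each evaluated in constant time via the feasibility and objective checks of~(\ref{eq:savings_objective}). This yields $O(W^3)$ per scenario--type pair and $O(W^3 T m)$ overall. For \emph{Construct Route Sets}, the loop over types runs $T$ times, and for each type the while loop produces at most $L^t\le W$ route sets, contributing the $O(WT)$ term. For \emph{DRWSC}, the set-cover MILP~(\ref{eq:demand_robust_weighted_set_cover}) has one first-stage variable $x^0_S$ and $m$ second-stage variables $x^k_S$ per set $S$; since the number of route sets across all types is $\sum_t L^t \le WT$, the total number of binary variables is $O(WT(m+1))$, so its branch-and-bound tree has $O(2^{WT(m+1)})$ nodes. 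Summing the three phases gives the claimed $O(W^3 T m + WT + 2^{WT(m+1)})$.

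For MIP I would tally the binary variables of~(\ref{eq:milp_demand_robust}). The routing variables $x^{s,k,t,p}_{i,j}$ range over $s\in\{0,1\}$, the $m$ scenarios, the $T$ types, the $P$ agents, and the $O(W^2)$ customer pairs $(i,j)$, giving $2mTPW^2$ of them; the purchase variables $y^{t,p}$ give another $TP$. The service-time variables $t^{s,k,t,p}_i$ and $z$ are real-valued and so contribute no branching depth. Hence the formulation has $2mTPW^2 + TP$ binary variables, and the branch-and-bound tree has $O(2^{2mTPW^2 + TP})$ nodes, which is the stated bound.

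The main obstacle is justifying that only the tree size matters and the per-node work can be dropped: this rests on the modeling assumption, stated above, that each node's LP relaxation is solved by simplex in time polynomial in practice, so $M$ contributes only a polynomial factor dominated by the exponential tree size. The remaining care is purely combinatorial: getting the operation count of the insertion heuristic right (the $W^3$, rather than $W^2$, arises because all $O(W^2)$ positions are re-scanned for each of the $W$ insertions) and correctly multiplying out every index range when counting binary variables, so that no factor of $m$, $T$, $P$, or $W$ is dropped or double-counted.
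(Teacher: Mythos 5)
Your proof is correct and follows essentially the same route as the paper: decompose SCM into its three sequential phases, bound \emph{Generate Routes} by $O(W^3Tm)$ and \emph{Construct Route Sets} by $O(WT)$, count the $WT(m+1)$ binary variables of the set-cover MILP and the $2mTPW^2+TP$ binary variables of formulation~(\ref{eq:milp_demand_robust}), and invoke the branch-and-bound tree-size bound with polynomial per-node work. Your write-up is actually somewhat more explicit than the paper's (which cites prior work for the $O(W^3)$ insertion cost and simply asserts the variable counts), but the argument is the same.
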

\begin{proof}
The worst case time complexity of Algorithm \ref{alg:generate_routes} is $O(W^3 T m)$. In \textit{Generate Routes} all routes, all customers, and all locations are evaluated - leading to $O(W^3)$ for a fixed type and scenario \cite{campbell2004efficient}.  The complexity of Algorithm \ref{alg:construct_route_sets} is $O(W T)$.  After Algorithm \ref{alg:construct_route_sets} completes we have mapped the problem to demand-robust weighted set cover in polynomial time.
DRWSC shown in equation \ref{eq:demand_robust_weighted_set_cover} is solved by BnB.  The maximum number of sets as input to problem (\ref{eq:demand_robust_weighted_set_cover}) is $WT$, and there are $m+1$ variables for each set $S$, leading to $WT(m+1)$ binary variables. The time complexity of DRWSC is $O(W^3 T m + WT + 2^{WT(m+1)})$.
The number of binary variables in solving problem (\ref{eq:milp_demand_robust}) is $2mTPW^2 +TP$.  Since problem (\ref{eq:milp_demand_robust}) is directly given to Gurobi, the time complexity of MIP is $O(2^{2mTPW^2 +TP})$.
\end{proof}
\begin{lemma} The relative difference in computation time between MIP and DRWSC increases at a rate of $O(2^{2mTPW^2})$.
\end{lemma}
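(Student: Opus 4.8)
The plan is to take the ratio of the two running times established in the previous lemma and extract its leading exponential term. I read the \emph{relative difference} as the normalized gap $(T_{\mathrm{MIP}} - T_{\mathrm{DRWSC}})/T_{\mathrm{DRWSC}}$, where $T_{\mathrm{MIP}} = O(2^{2mTPW^2 + TP})$ and $T_{\mathrm{DRWSC}} = O(W^3Tm + WT + 2^{WT(m+1)})$ come from the preceding lemma. Since this quantity equals $T_{\mathrm{MIP}}/T_{\mathrm{DRWSC}} - 1$ and the ratio diverges, the additive constant is asymptotically negligible, and it suffices to bound the ratio itself.

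First I would note that in each complexity the exponential term dominates the polynomial terms for large problem size, so that asymptotically $T_{\mathrm{DRWSC}} \sim 2^{WT(m+1)} = 2^{WTm + WT}$ and $T_{\mathrm{MIP}} \sim 2^{2mTPW^2 + TP}$. Dividing one power of two by the other collapses the ratio to a single exponential whose exponent is the difference of the two exponents:
\[
\frac{T_{\mathrm{MIP}}}{T_{\mathrm{DRWSC}}} = 2^{\,2mTPW^2 + TP - WTm - WT}.
\]
The next step is to identify the dominant term of this exponent. Among $2mTPW^2$, $TP$, $WTm$, and $WT$, the first is of strictly highest order, exceeding $WTm$ by a factor $2PW$, exceeding $WT$ by a factor $2mPW$, and exceeding $TP$ by a factor $2mW^2$. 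Hence the exponent is $2mTPW^2(1+o(1))$, and the relative difference increases at the rate $O(2^{2mTPW^2})$, as claimed.

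The step I expect to be the main obstacle is making the domination argument rigorous, namely justifying that the subtracted terms $WTm + WT$ and the additive $TP$ do not alter the asymptotic rate. This requires the mild assumption that $m, T, P, W \ge 1$, under which $TP \le 2mTPW^2$ while the subtracted terms can only shrink the exponent, so that the exponent is $O(2mTPW^2)$; this is precisely the regime of the lemma. A secondary subtlety is that both complexities are stated only as $O(\cdot)$ upper bounds, so the conclusion is properly an upper bound on the growth rate; establishing tightness would additionally require a matching lower bound on the branch-and-bound tree size, which is not needed for the stated $O(\cdot)$ claim.
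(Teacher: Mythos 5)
Your proposal matches the paper's own (one-line) proof: form the ratio of the two complexities from the preceding lemma, observe that the exponential term $2^{WT(m+1)}$ dominates the polynomial terms in the DRWSC complexity, and conclude that the quotient $O(2^{2mTPW^2+TP})/O(2^{WT(m+1)})$ is $O(2^{2mTPW^2})$ because $2mTPW^2$ dominates the remaining exponent terms. Your reading of ``relative difference'' as the normalized gap reduces to the same ratio, and your added care about which exponent terms dominate is, if anything, slightly more explicit than the paper's argument.
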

\begin{proof}
The ratio of $O(2^{2mTPW^2 +TP})$ and $O(W^3 T m + WT + 2^{WT(m+1)})$ is $O(2^{2mTPW^2 +TP})/O(2^{WT(m+1)}) = O(2^{2mTPW^2})$.
\end{proof}
This analysis is consistent with the relative difference in solution times seen in the computational results, as $W, T, P, m$ increase.
\begin{remark} In comparing performance of branch-and-bound algorithms, a linear relaxation of the subproblem is typically used in the analysis \cite{wolsey1980heuristic}, hence we believe this assumption to be modest. 
\end{remark}
\begin{remark}  The presence of big-M constraints \ref{eq:consistent_demand_robust} in problem (\ref{eq:milp_demand_robust}) leads to a weak linear relaxation.  While the strength of the linear relaxation may not affect its worst-case running time, it will have a large effect on the quality of the integer bounds.  Thus, more of the $2^{2mTPW^2 +TP}$ nodes will be explored in the BnB tree, which, in addition to computation time is a typical metric used for measuring time complexity in comparing two branch-and-bound algorithms \cite{morrison2016branch}. DRWSC does not suffer from these big-M constraints.  
\end{remark}
\section{Computational Results}
We evaluate the performance of our proposed algorithm using Solomon's benchmark instances \cite{solomon1987algorithms}, briefly described as: three types of instances $R,C, $and $RC$ for random, clustered, and mixed. $101$/$201$ indicates narrow/wide time windows resp. We modify an instance to suit our problem as follows: for a given number of customers, types, and scenarios (denoted  $N$-$T$-$m$), we randomly sample $N$ customers with replacement for each of the $m$ scenarios.  For each number of types, we assign a probability that a particular customer is compatible with an agent (except for the depot, which is compatible with all types).  This probability is used to generate the entries in all of the timetables $G_k$.  The probabilities for each number of types are $10\%,40\%,$ and $80\%$ for $T=2,4,8$, respectively.  We select problems of practical size for applications such as road freight transportation \cite{zak2011multiple}.  For all examples, $\sigma_k=2$, $\phi=0.5$, and for each Solomon instance and $N$-$T$-$m$ tuple, we construct 10 examples independently, and aggregate their statistics.
The MIP model requires $P$ for solving problem \ref{eq:milp_demand_robust} - the maximum number of agents for any type.  We set $P$ sufficiently large such that SCM can find a solution, and use the same $P$ for solving using Gurobi.  All computation was done on USC's CARC high performance computing cluster, using Python 3.9.11, Gurobi 10.0.0, using 40G of RAM, running at 2.6GHz. We set a time limit of 10 minutes for Gurobi. \\
\indent For a comprehensive comparison with Gurobi, we investigate the performance from two perspectives:  First, we leave Gurobi's default settings, and set a time limit of 10 minutes.  Second, we seek the Gurobi settings under which it returns a similar computation time as SCM, and we compare the results in terms of solution quality.  By setting MIPFocus and MIPGap to $1$, and $1.0$, resp., Gurobi emphasizes finding many feasible solutions quickly, rather than exploring a single branch to full depth.  Gurobi terminates when the primal-dual gap is below MIPGap.  In our computational experiments, increasing the MIPGap beyond $1.0$ caused Gurobi to terminate prematurely with the trivial lower bound $0$.\\
\indent In the tables below, aggregated over 10 examples, the examples when Gurobi either exceeded the time limit, or reported a trivial lower bound of $0$, were removed before computing the mean and spread since they would be considered outliers.  In the table columns, time\% is the average percent reduction in time from MIP to SCM, i.e. $\frac{\text{time}_{MIP}-\text{time}_{SCM}}{\text{time}_{MIP}}*100$.  $\bar{\alpha}_{lb}, \bar{\alpha}_{ub}$ are the average values for $\frac{{OBJ}_{SCM}}{{MIP}_{LB}}$ and $\frac{{OBJ}_{SCM}}{{MIP}_{UB}}$, where ${OBJ}_{SCM}$ is the objective found by SCM, and ${MIP}_{LB},{MIP}_{UB}$ are the lower and upper bounds given by MIP. time\%std is the standard deviation of time\%. $\sigma_{\alpha_{lb}}$ , $\sigma_{\alpha_{ub}}$ are the standard deviations of $\bar{\alpha}_{lb},$ and $\bar{\alpha}_{ub}$ resp. TO is the number of times out of the 10 examples that Gurobi exceeded the time limit, and lb zero is the number of times that Gurobi reported a trivial lower bound of $0$. $\bar{t}_{SCM}$ is the average computation time taken by SCM, and T is the number of types. Missing data indicates the value could not be computed due to the  number of timeouts by MIP.\\
\indent For the default Gurobi parameters, for all of the Solomon instances tried, we see both the average percent relative time difference, and the number of time outs TO, increase with both the problem size, and the higher levels of compatibility (tables \ref{tab:r101_default}, through \ref{tab:rc201_default}).
  For small problem sizes, although MIP finds a solution faster than SCM, the average time for SCM still remains on the order of a few seconds.  With the exception of instance $C101$, when $N=25$ for the cases when MIP does not timeout for all 10 examples, time\% approaches $100\%$.  This trend is amplified as $T$ increases.  It is also worth noting that examples with 35, and 50 customers were tried for all levels of compatibility, and MIP ran out of memory in all of those examples, while SCM still finished computing in less than 10 seconds.  These observations are consistent with the exponential time difference in our time complexity analysis.  Finally, for all instances with default Gurobi parameters, the empirical approximation ratio of SCM never exceeds 2.0.\\
\indent For the gurobi settings MIPFocus=1, and MIPGap=1.0 in tables \ref{tab:r101} through \ref{tab:rc201}, again with the exception of instance $C101$, SCM still finds a solution with a significantly lower upper bound (as indicated by  $\bar{\alpha}_{ub} <1$). Across all instances with settings MIPFocus=1, and MIPGap=1.0, $\bar{\alpha}_{lb}$ never exceeds $4.5$.  Interestingly when solving instances with wide time windows $R201$ and $RC201$, MIP seems to either timeout, or almost immediately find a worse solution than SCM.  So, even when MIP approximates the solution in a similar amount of time as SCM, empirically its approximation ratios are notably worse. 
\setlength{\floatsep}{0.1pt}
\begin{table}
\captionsetup{labelfont=bf,
              justification=raggedright,
              singlelinecheck=false}
\caption{R101, default Gurobi settings.}
\label{tab:r101_default}
\begin{tabular}{@{}llllllllllll@{}}
\toprule
   N  & m & time\% & $\bar{\alpha}_{lb}$ & $\bar{\alpha}_{ub}$ & $\bar{t}_{SCM}$ & time\%std & $\sigma_{\alpha_{lb}}$ & $\sigma_{\alpha_{ub}}$ & TO & lb zero & T \\ \midrule
  5  & 2 & -96.06       & 1.02        & 1.02        & 2.62          & 162.82          & 0.08       & 0.08       & 0       & 0      & 2 \\
  10 & 3 & -149.21      & 1.14        & 1.14        & 2.98          & 162.66          & 0.1        & 0.1        & 0       & 0      & 2 \\
  15 & 5 & 41.11        & 1.14        & 1.14        & 2.99          & 64.13           & 0.1        & 0.1        & 0       & 0      & 2 \\
  25 & 3 & 66.09        & 1.14        & 1.14        & 1.99          & 36.62           & 0.15       & 0.15       & 1       & 0      & 2 \\\midrule
  5  & 2 & -34.07       & 1.35        & 1.35        & 1.72          & 56.23           & 0.43       & 0.43       & 0       & 0      & 4 \\
  10 & 3 & -30.99       & 1.53        & 1.53        & 2.23          & 94.44           & 0.21       & 0.21       & 0       & 0      & 4 \\
  15 & 5 & 13.06        & 1.44        & 1.44        & 2.98          & 57.2            & 0.21       & 0.21       & 0       & 0      & 4 \\
  25 & 3 & 87.49        & 1.55        & 1.55        & 3.16          & 24.65           & 0.13       & 0.13       & 2       & 0      & 4 \\\midrule
  5  & 2 & 9.36         & 1.35        & 1.35        & 2.82          & 78.62           & 0.34       & 0.34       & 0       & 0      & 8 \\
  10 & 3 & 93.27        & 1.42        & 1.42        & 3.1           & 7.86            & 0.2        & 0.2        & 6       & 0      & 8 \\
  15 & 5 & 98.39        & 1.52        & 1.52        & 2.53          & 1.76            & 0.23       & 0.23       & 5       & 0      & 8 \\
  25 & 3 & 97.73        & 1.62        & 1.62        & 2.37          &                 &            &            & 9       & 0      & 8 \\ \bottomrule
\end{tabular}
\caption{R201, default Gurobi settings.}
\label{tab:r201_default}
\begin{tabular}{@{}llllllllllll@{}}
\toprule
 N  & m & time\% & $\bar{\alpha}_{lb}$ & $\bar{\alpha}_{ub}$ & $\bar{t}_{SCM}$ & time\%std & $\sigma_{\alpha_{lb}}$ & $\sigma_{\alpha_{ub}}$ & TO & lb zero & T \\ \midrule
 5  & 2 & -190.13      & 1.0         & 1.0         & 2.95          & 136.18          & 0.0        & 0.0        & 0       & 0      & 2 \\
 10 & 3 & -258.01      & 1.0         & 1.0         & 3.82          & 158.4           & 0.0        & 0.0        & 0       & 0      & 2 \\
 15 & 5 & -43.45       & 1.2         & 1.2         & 2.6           & 100.7           & 0.26       & 0.26       & 0       & 0      & 2 \\
 25 & 3 & 86.39        & 1.7         & 1.7         & 4.82          & 10.71           & 0.38       & 0.38       & 1       & 0      & 2 \\\midrule
 5  & 2 & -58.56       & 1.4         & 1.4         & 2.09          & 92.76           & 0.29       & 0.29       & 0       & 0      & 4 \\
 10 & 3 & -169.16      & 1.2         & 1.2         & 3.72          & 156.66          & 0.17       & 0.17       & 0       & 0      & 4 \\
 15 & 5 & -58.08       & 1.05        & 1.05        & 3.18          & 113.84          & 0.11       & 0.11       & 0       & 0      & 4 \\
 25 & 3 & 11.1         & 1.36        & 1.36        & 5.14          & 120.29          & 0.21       & 0.21       & 0       & 0      & 4 \\\midrule
 5  & 2 & -21.96       & 1.8         & 1.8         & 2.56          & 114.33          & 0.63       & 0.63       & 0       & 0      & 8 \\
 10 & 3 & 75.83        & 1.35        & 1.35        & 2.55          & 14.04           & 0.34       & 0.34       & 0       & 0      & 8 \\
 15 & 5 & 97.13        & 1.96        & 1.96        & 5.29          & 1.0             & 0.29       & 0.29       & 2       & 0      & 8 \\
 25 & 3 &              &             &             & 6.65          &                 &            &            & 10      & 2      & 8 \\ \bottomrule
\end{tabular}
\caption{C101, default Gurobi settings.}
\label{tab:c101_default}
\begin{tabular}{@{}lllllllllllll@{}}
\toprule
N  & m & time\% & $\bar{\alpha}_{lb}$ & $\bar{\alpha}_{ub}$ & $\bar{t}_{SCM}$ & time\%std & $\sigma_{\alpha_{lb}}$ & $\sigma_{\alpha_{ub}}$ & TO & lb zero & T \\ \midrule
5  & 2 & -368.02      & 1.02        & 1.02        & 5.14          & 238.4           & 0.08       & 0.08       & 0       & 0      & 2 \\
10 & 3 & -82.12       & 1.0         & 1.0         & 1.68          & 88.44           & 0.0        & 0.0        & 0       & 0      & 2 \\
15 & 5 & -61.81       & 0.92        & 0.92        & 2.39          & 140.86          & 0.25       & 0.25       & 0       & 0      & 2 \\
25 & 3 & -57.43       & 1.08        & 1.08        & 2.5           & 58.73           & 0.07       & 0.07       & 0       & 0      & 2 \\\midrule
5  & 2 & -119.33      & 1.4         & 1.4         & 2.39          & 184.9           & 0.36       & 0.36       & 0       & 0      & 4 \\
10 & 3 & -107.46      & 1.54        & 1.54        & 1.82          & 112.88          & 0.31       & 0.31       & 0       & 0      & 4 \\
15 & 5 & -54.3        & 1.49        & 1.49        & 2.43          & 80.88           & 0.19       & 0.19       & 0       & 0      & 4 \\
25 & 3 & -23.57       & 1.45        & 1.45        & 3.03          & 73.31           & 0.19       & 0.19       & 0       & 0      & 4 \\\midrule
5  & 2 & -103.23      & 1.22        & 1.22        & 1.79          & 57.02           & 0.19       & 0.19       & 0       & 0      & 8 \\
10 & 3 & -175.39      & 1.52        & 1.52        & 3.3           & 161.21          & 0.38       & 0.38       & 0       & 0      & 8 \\
15 & 5 & 48.33        & 1.35        & 1.35        & 2.31          & 18.1            & 0.31       & 0.31       & 1       & 0      & 8 \\
25 & 3 & 59.1         & 1.41        & 1.41        & 3.27          & 27.56           & 0.16       & 0.16       & 0       & 0      & 8 \\ \bottomrule
\end{tabular}
\end{table}
\begin{table}
\captionsetup{labelfont=bf,
              justification=raggedright,
              singlelinecheck=false}
\caption{RC101, default Gurobi settings.}
\label{tab:rc101_default}
\begin{tabular}{@{}llllllllllll@{}}
\toprule
N  & m & time\% & $\bar{\alpha}_{lb}$ & $\bar{\alpha}_{ub}$ & $\bar{t}_{SCM}$ & time\%std & $\sigma_{\alpha_{lb}}$ & $\sigma_{\alpha_{ub}}$ & TO & lb zero & T \\ \midrule
5  & 2 & -115.59      & 1.05        & 1.05        & 2.16          & 192.63          & 0.16       & 0.16       & 0       & 0      & 2 \\
10 & 3 & -75.56       & 1.1         & 1.1         & 1.94          & 103.92          & 0.17       & 0.17       & 0       & 0      & 2 \\
15 & 5 & 17.35        & 1.48        & 1.48        & 2.55          & 77.77           & 0.19       & 0.19       & 0       & 0      & 2 \\
25 & 3 & 94.87        & 1.49        & 1.49        & 2.53          & 4.04            & 0.18       & 0.18       & 3       & 0      & 2 \\\midrule
5  & 2 & -139.26      & 1.23        & 1.23        & 2.01          & 115.96          & 0.34       & 0.34       & 0       & 0      & 4 \\
10 & 3 & -47.79       & 1.46        & 1.46        & 2.42          & 181.01          & 0.14       & 0.14       & 0       & 0      & 4 \\
15 & 5 & 75.78        & 1.6         & 1.6         & 2.21          & 31.45           & 0.25       & 0.25       & 0       & 0      & 4 \\
25 & 3 & 99.33        & 1.71        & 1.71        & 1.93          &                 &            &            & 9       & 0      & 4 \\\midrule
5  & 2 & 37.88        & 1.8         & 1.8         & 1.75          & 43.37           & 0.79       & 0.79       & 0       & 0      & 8 \\
10 & 3 & 94.32        & 1.71        & 1.71        & 2.54          & 6.35            & 0.26       & 0.26       & 2       & 0      & 8 \\
15 & 5 & 99.23        & 1.78        & 1.78        & 3.34          & 0.21            & 0.19       & 0.19       & 7       & 0      & 8 \\
25 & 3 &              &             &             & 2.75          &                 &            &            & 10      & 0      & 8 \\ \bottomrule
\end{tabular}
\caption{RC201, default Gurobi settings.}
\label{tab:rc201_default}
\begin{tabular}{@{}llllllllllll@{}}
\toprule
N  & m & time\% & $\bar{\alpha}_{lb}$ & $\bar{\alpha}_{ub}$ & $\bar{t}_{SCM}$ & time\%std & $\sigma_{\alpha_{lb}}$ & $\sigma_{\alpha_{ub}}$ & TO & lb zero & T \\ \midrule
5  & 2 & -138.71      & 1.0         & 1.0         & 2.1           & 117.13          & 0.0        & 0.0        & 0       & 0      & 2 \\
10 & 3 & -54.29       & 1.1         & 1.1         & 2.05          & 73.67           & 0.21       & 0.21       & 0       & 0      & 2 \\
15 & 5 & -51.46       & 1.15        & 1.15        & 1.9           & 71.49           & 0.34       & 0.34       & 0       & 0      & 2 \\
25 & 3 & 96.35        & 1.52        & 1.52        & 2.7           & 3.48            & 0.38       & 0.38       & 1       & 0      & 2 \\\midrule
5  & 2 & -177.38      & 1.48        & 1.48        & 2.41          & 116.72          & 0.41       & 0.41       & 0       & 0      & 4 \\
10 & 3 & -110.61      & 1.18        & 1.18        & 2.39          & 173.02          & 0.2        & 0.2        & 0       & 0      & 4 \\
15 & 5 & -35.27       & 1.21        & 1.21        & 2.25          & 121.59          & 0.22       & 0.22       & 0       & 0      & 4 \\
25 & 3 & 94.57        & 1.36        & 1.36        & 2.99          & 2.87            & 0.13       & 0.13       & 1       & 0      & 4 \\\midrule
5  & 2 & -37.51       & 1.6         & 1.6         & 2.36          & 157.33          & 0.7        & 0.7        & 0       & 0      & 8 \\
10 & 3 & 84.05        & 1.45        & 1.45        & 2.36          & 11.97           & 0.44       & 0.44       & 0       & 0      & 8 \\
15 & 5 & 98.12        & 1.81        & 1.81        & 2.73          & 1.54            & 0.46       & 0.46       & 2       & 0      & 8 \\
25 & 3 &              &             &             & 2.86          &                 &            &            & 10      & 2      & 8 \\ \bottomrule
\end{tabular}
\caption{R101,  Gurobi settings MIPFocus=1, MIPGap=1.0.}
\label{tab:r101}
\begin{tabular}{@{}llllllllllll@{}}
\toprule
N  & m & time\% & $\bar{\alpha}_{lb}$ & $\bar{\alpha}_{ub}$ & $\bar{t}_{SCM}$ & time\%std & $\sigma_{\alpha_{lb}}$ & $\sigma_{\alpha_{ub}}$ & TO & lb zero & T \\ \midrule
5  & 2 & -243.59      & 1.4         & 0.75        & 2.62          & 164.42          & 0.33       & 0.23       & 0       & 0      & 2 \\
10 & 3 & -306.58      & 1.37        & 0.69        & 2.98          & 162.15          & 0.19       & 0.23       & 0       & 0      & 2 \\
15 & 5 & -169.79      & 1.56        & 0.51        & 2.99          & 134.75          & 0.28       & 0.16       & 0       & 0      & 2 \\
25 & 3 & -1.29        & 1.51        & 0.63        & 2.0           & 27.89           & 0.2        & 0.24       & 0       & 0      & 2 \\\midrule
5  & 2 & -85.5        & 2.33        & 0.83        & 1.72          & 92.5            & 1.05       & 0.22       & 0       & 0      & 4 \\
10 & 3 & -123.95      & 2.26        & 0.45        & 2.23          & 144.94          & 0.5        & 0.14       & 0       & 0      & 4 \\
15 & 5 & -102.51      & 2.12        & 0.66        & 2.98          & 116.46          & 0.58       & 0.2        & 0       & 0      & 4 \\
25 & 3 & 7.77         & 2.1         & 0.48        & 3.21          & 37.43           & 0.25       & 0.23       & 0       & 0      & 4 \\\midrule
5  & 2 & -313.57      & 2.55        & 0.67        & 2.82          & 295.07          & 1.44       & 0.27       & 0       & 0      & 8 \\
10 & 3 & -120.76      & 3.03        & 0.5         & 2.51          & 215.93          & 0.85       & 0.18       & 0       & 0      & 8 \\
15 & 5 & 39.88        & 3.13        & 0.45        & 2.85          & 39.89           & 0.69       & 0.16       & 0       & 0      & 8 \\
25 & 3 & 78.51        & 2.89        & 0.4         & 3.13          & 10.34           & 0.6        & 0.14       & 0       & 0      & 8 \\ \bottomrule
\end{tabular}
\end{table}
\begin{table}
\captionsetup{labelfont=bf,
              justification=raggedright,
              singlelinecheck=false}
\caption{R201, Gurobi settings MIPFocus=1, MIPGap=1.0}
\label{tab:r201}
\begin{tabular}{@{}llllllllllll@{}}
\toprule
N  & m & time\% & $\bar{\alpha}_{lb}$ & $\bar{\alpha}_{ub}$ & $\bar{t}_{SCM}$ & time\%std & $\sigma_{\alpha_{lb}}$ & $\sigma_{\alpha_{ub}}$ & TO & lb zero & T \\ \midrule
5  & 2 & -434.39      & 1.5         & 0.54        & 2.95          & 264.41          & 0.53       & 0.18       & 0       & 0      & 2 \\
10 & 3 &              &             &             & 3.82          &                 &            &            & 10      & 0      & 2 \\
15 & 5 &              &             &             & 2.6           &                 &            &            & 10      & 0      & 2 \\
25 & 3 & -273.25      & 1.85        & 0.36        & 4.77          & 217.48          & 0.24       & 0.05       & 0       & 0      & 2 \\\midrule
5  & 2 & -283.75      & 2.58        & 0.94        & 2.09          & 111.86          & 1.07       & 0.24       & 0       & 0      & 4 \\
10 & 3 & -413.32      & 2.24        & 0.49        & 3.72          & 345.66          & 1.04       & 0.22       & 1       & 0      & 4 \\
15 & 5 & -220.7       & 1.46        & 0.48        & 3.18          & 92.05           & 0.29       & 0.03       & 8       & 0      & 4 \\
25 & 3 & -197.27      & 2.82        & 0.29        & 5.14          & 88.92           & 1.56       & 0.1        & 0       & 0      & 4 \\\midrule
5  & 2 & -332.53      & 2.11        & 0.46        & 2.56          & 197.74          & 0.33       & 0.12       & 1       & 0      & 8 \\
10 & 3 & -179.63      & 3.17        & 0.38        & 2.55          & 125.31          & 0.41       & 0.22       & 4       & 0      & 8 \\
15 & 5 & -306.68      & 2.5         & 0.22        & 5.24          & 330.62          & 0.0        & 0.03       & 8       & 0      & 8 \\
25 & 3 & -71.94       & 4.5         & 0.13        & 6.75          & 53.23           & 0.76       & 0.03       & 2       & 2      & 8 \\ \bottomrule
\end{tabular}
\caption{C101, Gurobi settings MIPFocus=1, MIPGap=1.0}
\label{tab:c101}
\begin{tabular}{@{}llllllllllll@{}}
\toprule
 N  & m & time\% & $\bar{\alpha}_{lb}$ & $\bar{\alpha}_{ub}$ & $\bar{t}_{SCM}$ & time\%std & $\sigma_{\alpha_{lb}}$ & $\sigma_{\alpha_{ub}}$ & TO & lb zero & T \\ \midrule
 5  & 2 & -742.31      & 1.52        & 0.81        & 5.14          & 368.14          & 1.07       & 0.13       & 0       & 0      & 2 \\
 10 & 3 & -173.29      & 1.0         & 0.88        & 1.68          & 104.19          & 0.0        & 0.23       & 0       & 0      & 2 \\
 15 & 5 & -169.58      & 1.03        & 0.84        & 2.39          & 128.41          & 0.23       & 0.29       & 0       & 0      & 2 \\
 25 & 3 & -64.61       & 1.08        & 1.03        & 2.5           & 40.53           & 0.07       & 0.2        & 0       & 0      & 2 \\\midrule
 5  & 2 & -319.81      & 1.92        & 1.01        & 2.39          & 262.29          & 0.91       & 0.28       & 0       & 0      & 4 \\
 10 & 3 & -188.78      & 1.54        & 1.17        & 1.82          & 140.86          & 0.31       & 0.61       & 0       & 0      & 4 \\
 15 & 5 & -110.33      & 1.55        & 0.65        & 2.43          & 68.17           & 0.2        & 0.49       & 0       & 0      & 4 \\
 25 & 3 & -29.82       & 1.48        & 1.22        & 3.03          & 66.39           & 0.22       & 0.37       & 0       & 0      & 4 \\\midrule
 5  & 2 & -234.01      & 1.22        & 1.01        & 1.79          & 90.88           & 0.19       & 0.25       & 0       & 0      & 8 \\
 10 & 3 & -280.78      & 1.52        & 0.71        & 3.3           & 170.41          & 0.38       & 0.48       & 0       & 0      & 8 \\
 15 & 5 & 34.92        & 1.36        & 1.1         & 2.36          & 17.93           & 0.3        & 0.39       & 0       & 0      & 8 \\
 25 & 3 & 65.35        & 1.41        & 1.21        & 3.27          & 18.55           & 0.16       & 0.27       & 0       & 0      & 8 \\ \bottomrule
\end{tabular}
\caption{RC101, Gurobi settings MIPFocus=1, MIPGap=1.0}
\label{tab:rc101}
\begin{tabular}{@{}llllllllllll@{}}
\toprule
 N  & m & time\% & $\bar{\alpha}_{lb}$ & $\bar{\alpha}_{ub}$ & $\bar{t}_{SCM}$ & time\%std & $\sigma_{\alpha_{lb}}$ & $\sigma_{\alpha_{ub}}$ & TO & lb zero & T \\ \midrule
 5  & 2 & -162.78      & 1.8         & 0.49        & 2.16          & 123.63          & 0.79       & 0.06       & 0       & 0      & 2 \\
 10 & 3 & -181.42      & 1.63        & 0.52        & 1.94          & 62.17           & 0.54       & 0.22       & 0       & 0      & 2 \\
 15 & 5 & -156.89      & 1.68        & 0.55        & 2.55          & 87.75           & 0.29       & 0.18       & 0       & 0      & 2 \\
 25 & 3 & -22.9        & 2.02        & 0.48        & 2.58          & 45.34           & 0.35       & 0.12       & 0       & 0      & 2 \\\midrule
 5  & 2 & -164.28      & 2.95        & 0.74        & 2.01          & 108.86          & 0.76       & 0.17       & 0       & 0      & 4 \\
 10 & 3 & -257.51      & 2.69        & 0.51        & 2.42          & 171.1           & 1.01       & 0.14       & 0       & 0      & 4 \\
 15 & 5 & -67.93       & 2.41        & 0.52        & 2.21          & 44.88           & 0.67       & 0.18       & 0       & 0      & 4 \\
 25 & 3 & 13.35        & 3.46        & 0.46        & 2.88          & 44.39           & 1.21       & 0.17       & 0       & 0      & 4 \\\midrule
 5  & 2 & -114.31      & 2.4         & 0.73        & 1.75          & 132.17          & 0.7        & 0.56       & 0       & 0      & 8 \\
 10 & 3 & -122.4       & 3.45        & 0.38        & 2.63          & 107.53          & 1.26       & 0.09       & 0       & 0      & 8 \\
 15 & 5 & 12.4         & 3.2         & 0.18        & 2.94          & 29.6            & 0.42       & 0.03       & 0       & 0      & 8 \\
 25 & 3 & 73.95        & 4.07        & 0.28        & 2.75          & 7.49            & 1.14       & 0.12       & 0       & 0      & 8 \\ \bottomrule
\end{tabular}
\end{table}
\begin{table}
\captionsetup{labelfont=bf,
              justification=raggedright,
              singlelinecheck=false}
\caption{RC201, Gurobi settings MIPFocus=1, MIPGap=1.0.}
\label{tab:rc201}
\begin{tabular}{@{}llllllllllll@{}}
\toprule
N  & m & time\% & $\bar{\alpha}_{lb}$ & $\bar{\alpha}_{ub}$ & $\bar{t}_{SCM}$ & time\%std & $\sigma_{\alpha_{lb}}$ & $\sigma_{\alpha_{ub}}$ & TO & lb zero & T \\ \midrule
5  & 2 & -224.83      & 1.7         & 0.6         & 2.1           & 140.17          & 0.48       & 0.28       & 0       & 0      & 2 \\
10 & 3 &              &             &             & 2.05          &                 &            &            & 10      & 0      & 2 \\
15 & 5 &              &             &             & 1.9           &                 &            &            & 10      & 0      & 2 \\
25 & 3 & -12.26       & 1.95        & 0.38        & 2.64          & 46.93           & 0.16       & 0.08       & 0       & 0      & 2 \\\midrule
5  & 2 & -251.14      & 3.1         & 0.77        & 2.41          & 175.49          & 0.99       & 0.26       & 0       & 0      & 4 \\
10 & 3 &              &             &             & 2.39          &                 &            &            & 10      & 0      & 4 \\
15 & 5 &              &             &             & 2.25          &                 &            &            & 10      & 0      & 4 \\
25 & 3 & -20.35       & 3.88        & 0.34        & 2.92          & 50.16           & 1.89       & 0.1        & 0       & 0      & 4 \\\midrule
5  & 2 & -377.52      & 2.67        & 0.69        & 2.36          & 269.57          & 1.15       & 0.56       & 7       & 0      & 8 \\
10 & 3 &              &             &             & 2.36          &                 &            &            & 10      & 0      & 8 \\
15 & 5 &              &             &             & 2.73          &                 &            &            & 10      & 0      & 8 \\
25 & 3 & 43.65        & 4.25        & 0.21        & 2.98          & 22.35           & 0.71       & 0.24       & 2       & 2      & 8 \\ \bottomrule
\end{tabular}
\end{table}
\section{Conclusions and Future Work}
This paper presents a heuristic algorithm, SCM, for  solving the demand robust fleet size problem with time windows and compatibility constraints.  The proposed SCM heuristic performs comparable, or better than Gurobi while the reduction in computation time scales exponentially with the size of the problem, in both theory and simulation.  Natural directions for future work will involve determining the approximation ratio for SCM, and also comparing the heuristic with approaches from constraint programming.  Work may also be done to understand the dependence of SCM's performance on the data, especially for clustered instances.  These future directions will provide insight into a broader class of large-scale practical problems for which mappings to set cover perform remarkably well.
\bibliographystyle{splncs04}
\bibliography{refs.bib}
%
%
%
%
%
\end{document}